\newtheorem{thm}{Theorem}[subsection]
\newtheorem{claim}{Claim}
\newtheorem{example}[thm]{Example}
\begin{document}
\title{Coverage statistics for sequence census methods} 
\author{Steven N. Evans, Valerie Hower and Lior Pachter}
\date{\today}

\maketitle

\begin{abstract}
\noindent
{\em Background:}
We study the statistical properties of fragment coverage in genome sequencing experiments. In an extension of the classic Lander-Waterman model, 
we consider the effect of the length distribution of fragments. We also introduce the notion of the {\em shape} of a coverage function, which can be used to detect abberations in coverage. The probability theory underlying these problems is essential for constructing models of current high-throughput sequencing experiments, where both sample preparation protocols and sequencing technology particulars can affect fragment length distributions. 
\\ \\
{\em Results:}
We show that regardless of fragment length distribution and under the mild assumption that fragment start sites are Poisson distributed, the fragments produced in a sequencing experiment can be viewed as resulting from a two-dimensional spatial Poisson process. We then study the jump skeleton of the the coverage function, and show that the induced trees are Galton-Watson trees whose parameters can be computed. 
\\ \\
{\em Conclusions:}
Our results extend standard analyses of shotgun sequencing that focus on coverage statistics at individual sites,
and provide a null model for detecting deviations from random
coverage in high-throughput sequence census based experiments. By focusing on fragments, we are also led to a new approach for visualizing sequencing data that should be of independent interest.

\end{abstract}
\section{Introduction}

The classic ``Lander-Waterman model'' \cite{Lander1988}  provides
statistical estimates for the read coverage in a whole genome shotgun
(WGS) sequencing experiment via the Poisson approximation to the
Binomial distribution. Although originally intended for estimating the
extent of coverage when mapping by fingerprinting random clones, the
Lander-Waterman model has served as an essential tool for estimating
sequencing requirements for modern WGS experiments \cite{Myers1997}. Although it makes a number of simplifying assumptions
(e.g. fixed fragment length and uniform fragment selection ) that are
violated in actual experiments, extensions and generalizations \cite{Wendl2005,Wendl2006} have continued to be developed and applied in a variety of settings.

The advent of ``high-throughput sequencing'', which refers to
massively parallel sequencing technologies has greatly increased the
scope and applicability of sequencing experiments. With the increasing
scope of experiments, new statistical questions about coverage statistics
have emerged. In particular, in the context of {\em sequence census
  methods}, it has become important to understand the
{\em shape} of coverage functions, rather than just coverage statistics
at individual sites.

Sequence census methods \cite{Wold} are experiments designed to assess
the content of a mixture of molecules via the creation of DNA
fragments whose abundances can be used to infer those of the original
molecules. The DNA fragments are identified by sequencing, and
the desired abundances inferred by solution of an inverse
problem. An example of a sequence census method is ChIP-Seq. In this
experiment, the goal is to determine the locations in the genome where
a specific protein binds. An antibody to the protein is used to ``pull
down'' fragments of DNA that are bound via a process called chromatin
immunoprecipitation (abbreviated by ChIP). These fragments form the
``mixture of molecules'' and after purifying the DNA, the fragments
are determined by sequencing. The resulting sequences are compared to
the genome, leading to a {\em coverage function}  that records, at each
site, the number of sequenced fragments that contained it. As with
many sequence census methods, ``noise'' in the experiment leads to
random sequenced fragments that may not correspond to bound DNA, and
therefore it is necessary to identify regions of the coverage function
that deviate from what is expected according to a suitable null
model.  

The purpose of this paper is not to develop methods for the analysis
of ChIP-Seq (or any other sequence census method), but rather to
present a null model for the shape of a coverage function that is of
general utility. That is, we propose a definition for the shape
of a fragment coverage function, and describe a random instance assuming that fragments are selected at
random from a genome, with lengths of fragments given by a known distribution. The distinction between our work and previous statistical studies of sequencing experiments, is that we go beyond the description of coverage at a single location, to a description of the change in coverage along a genome.

\section{The shape of a fragment coverage function}\label{treesection}

We begin by explaining what we mean by a {\em coverage function}. Given a
genome modeled as a string of fixed length $N$, a coverage function is a function
$f:\{1,\ldots,N\} \longrightarrow \mathbb{Z}_{\ge 0}$. The interpretation
of this function, is that $f(i)$ is the number of sequenced fragments
obtained from a sequencing experiment that cover position $i$ in the
genome. It is important to note that $N$ is typically large; for
example, the human genome consists of approximately $2.8$ billion
bases. Because $N$ is very large, we replace the finite set
$\{1,\ldots,N\}$ with $\mathbb{R}$, and re-define a coverage function
to be a function $f:\mathbb{R} \longrightarrow \mathbb{Z}_{\ge
  0}$. This helps to simplify our analysis.

We next introduce an object that describes a sequence coverage function's shape.  Our approach is motivated by recent applications of topology including persistent homology \cite{Carlsson-2009,Zomorodian-2005} and the use of critical points in shape analysis \cite{Biasotti-2008, deBerg-1997,Edelsbrunner-2003}.  For a given coverage function $f:\mathbb{R} \longrightarrow \mathbb{Z}_{\ge 0}$, we will define a rooted tree, which is a particular type of directed graph with all the directed edges pointing away from the root.  This tree $T_f$ is based on the \emph{upper-excursion sets of} $f$: $U_h:=\{(x,f(x)) | f(x) \ge h\}$, $h\in \mathbb{Z}_{\ge 0}$ and keeps track of how the sets $U_h$ evolve as $h$ decreases.  Long paths in $T_f$ represent features of the coverage function that persist through many values of $h$. 

Specifically, for each $h \in \mathbb{Z}_{\ge 0}$, let $C_h$ denote the set of connected components of the upper-excursion set  $U_h$.  We define the rooted tree $T_f=(V,E)$ as follows
\begin{itemize} 
\item Vertices in $V$ correspond to the connected components in the collection $\{ C_h\}_{h\in \mathbb{Z}_{\ge0}}$
\item $(i,j)\in E$ provided their corresponding connected components $c_i \in C_{h_i}$ and $c_j \in C_{h_j}$ with $h_i<h_j$ satisfy $h_i=h_j-1$ and $c_j \subset c_i$.
\end{itemize}
Note that the root of $T_f$ corresponds to the single connected component in $C_0$.  The tree $T_f$ is very similar to a contour tree \cite[\textsection 4.1]{Biasotti-2008}, which is built using level sets of a function, and a join tree \cite{Carr-2003}.  Indeed, suppose we ignore every vertex that is adjacent to only one vertex with greater height.  Then, the remaining vertices of $T_f$ correspond to (equivalence classes of) local extrema of $f$.  Each local maximum of $f$ yields the birth of a new connected component as we sweep down through $h\in \mathbb{Z}_{\ge 0}$ while a local minimum of $f$ merges connected components.  Since we do not require $f$ to have distinct critical values (as is frequently assumed), the vertices in $T_f$ can have arbitrary degrees, as is depicted in Figure \ref{curve-path-tree}C.  

In the sequel, we will use the following equivalent characterization that can be found in \cite[\textsection 2.3]{Evans-book}.  Given a coverage function $f:\mathbb{R}\longrightarrow \mathbb{Z}_{\ge 0}$ with $f(a)=f(b)=0$ and $f(x)>0$ for $x\in(a,b)$, we form an integer-valued sequence $x_0, \ldots, x_{2n}$ that records the changes in height of $f$ on the interval $[a,b]$.  The sequence $x_0, \ldots, x_{2n}$ consists of the $y$ values that $f$ travels through from $x_0:=f(a)=0$ to $x_{2n}:=f(b)=0$ and satisfies
\begin{eqnarray*}
 &x_0 = x_{2n} = 0, \\
 &x_i > 0 \mbox{ for } 0 < i < 2n, \\
&|x_i - x_{i-1}| = 1 \mbox{ for } 1 \le i \le 2n.
\end{eqnarray*}
Such a sequence is called a \emph{lattice path excursion away from} $0$.  Next, we define an equivalence relation on the set $\{0,1,\ldots,2n\}$ by setting
\[
i \equiv j \Longleftrightarrow x_i = x_j = \min_{i \le k \le j} x_k.
\]
The equivalence classes under this relation are in $1:1$ correspondence with the connected components in the upper-excursion sets of $\left. f\right|_{[a,b]}$.  One equivalence class is $\{0,2n\}$, and if $\{i_1, \ldots, i_p\}$ is an equivalence class with
$0<i_1 < i_2 < \ldots < i_p$ then $x_{i_1 - 1} = x_{i_1} - 1,$
whereas $x_{i_q - 1} = x_{i_q} + 1$ for $2 \le q \le p$.  Conversely,
any index $i$ with $x_{i - 1} = x_i - 1$ is the minimal element of
an equivalence class.   We use the minimal element of each equivalence class as its representative.  Thus, we can view the vertices of $T_{\left.f\right|_{[a,b]}}$ as the set $\{0\} \cup \{i | x_{i - 1} = x_i - 1\}.$  Two indices $i_1<i_2$ are adjacent in $T_{\left.f\right|_{[a,b]}}$ provided $x_{i_2} = x_{i_1} + 1$ and $x_k \ge x_{i_1}$ for $i_1 \le k \le i_2$.  Figure \ref{curve-path-tree} gives an example of a coverage function together with its lattice path excursion $(0,1,2,3,4,3,2,3,4,5,4,3,2,3,2,1,0)$ and rooted tree.  The minimal elements of each equivalence class in Figure \ref{curve-path-tree}B are depicted with red squares.  
\begin{figure}[htbp]
\begin{center}
\includegraphics[width=6.5in]{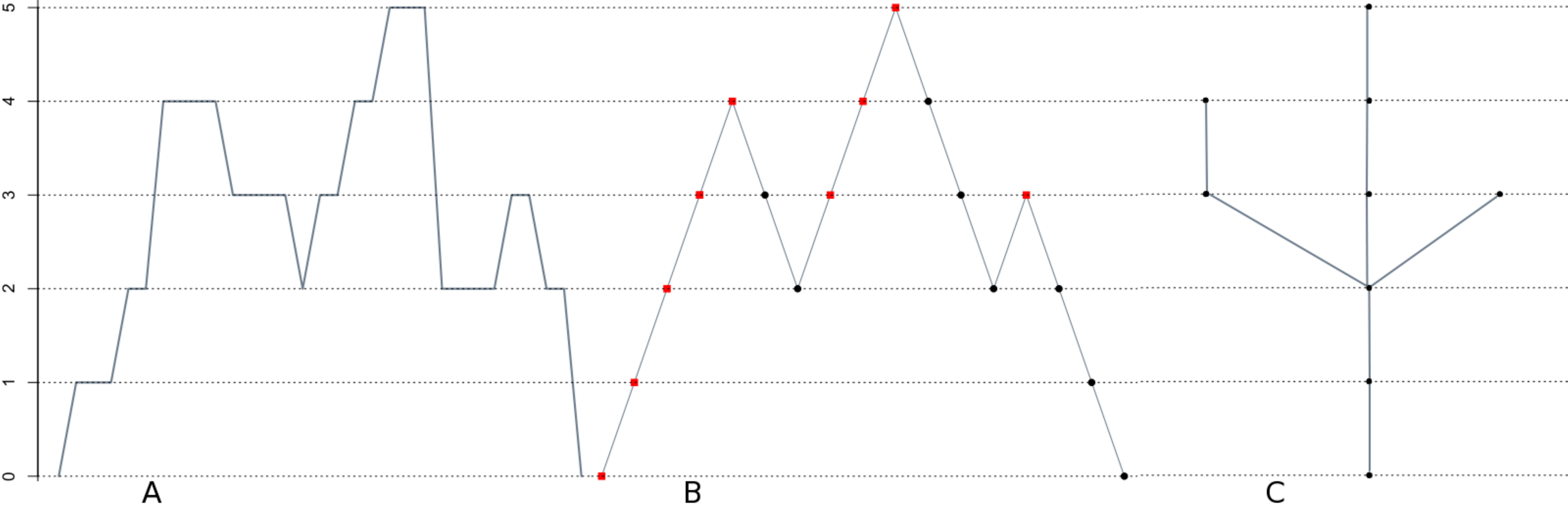}
\caption{ A coverage function (A) with its lattice path excursion (B) and rooted tree (C).}
\label{curve-path-tree}
\end{center}
\end{figure}

\section{Planar Poisson processes from sequencing experiments}
In order to model random coverage along the genome, we use a Poisson process to give random starting locations to the fragments.  Specifically, suppose that we have a stationary Poisson point process on $\mathbb{R}$ with intensity $\rho$.   At each point of the Poisson point process we lay down an interval that has that 
point as its left end-point. The lengths of the successive intervals are independent 
and identically distributed with common distribution $\mu$.  We will
use the notation $X$ for a coverage function built from this process
and $X_t$ for the height at a point $t$.      

Let $t_1, t_2, \cdots $ be the left-end points and $l_1,l_2,\cdots $ be the corresponding lengths of intervals.  The interval given by $(t_i,l_i)$ will cover a nucleotide $t_0$ provided $t_i\le t_0$ and $t_i+l_i \ge t_0$.  We can view this pictorially by plotting points $\{(t_j,l_j)\}$ in the plane.  Then $X_{t_0}$---the number of intervals covering $t_0$---is the number of points in the triangular region below.
\begin{figure}[htbp]
\begin{center}
\begin{tikzpicture}
\draw (-2,0) -- (5,0); 
\draw (2,0) -- (2,4); 
\draw (2,0) -- (-2,4)
node[midway, sloped,below] {$l=t_0-t$}; 
\fill[green!20!white] (2,0) -- (2,4) -- (-2,4) -- cycle;
\draw ( 2 cm,2pt) -- (2 cm,-3pt) node[anchor=north] {$t_0$}; 
\draw (4,2.5 )  node[anchor=north] {\small $(t,l)$-plane}; 

\end{tikzpicture}
\caption{A two dimensional view of a sequencing experiment.}
\label{wedge}
\end{center}
\end{figure}
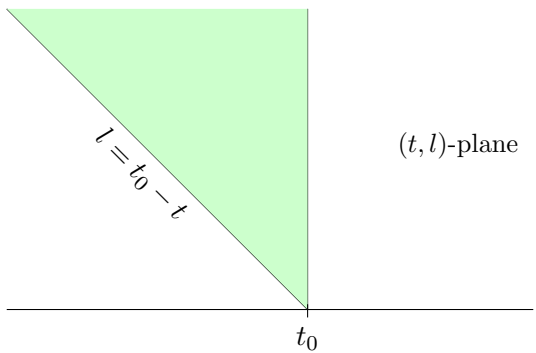
We now recall the definition of a two-dimensional Poisson process and refer the reader to \cite[\textsection 6.13]{Grimmett-book} or \cite[\textsection 2.4]{Daley-book} for the details.
Suppose $\Gamma$ is a locally finite measure on the Borel $\sigma$-algebra $\mathscr{B}(\mathbb{R}^2)$.  A random countable subset $\Pi$ of $\mathbb{R}^2$ is called a \emph{non-homogeneous Poisson process with mean measure} $\Gamma$ if, for all Borel subsets $A$, the random variables $N(A):=\#(A\cap \Pi)$ satisfy:
\begin{enumerate}
\item $N(A)$ has the Poisson distribution with parameter $\Gamma(A)$, and
\item If $A_1, \cdots , A_k$ are disjoint Borel subsets of $\mathbb{R}^2$, then $N(A_1),\cdots , N(A_k)$ are independent random variables.
\end{enumerate}
The following theorem is a consequence of  \cite[Proposition 12.3]{MR1876169}.
\begin{thm}\label{PoissonThm}
The collection $\{(t_i,l_i)\}$ of points obtained as described above is a non-homogeneous Poisson process with mean measure $\rho \, m \otimes \mu$.  Here $m$ is Lebesgue measure on $\mathbb{R}$.
\end{thm}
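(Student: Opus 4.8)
The plan is to recognize this construction as the classical operation of \emph{marking} a Poisson process and to deduce the claim from the marking theorem \cite[Proposition 12.3]{MR1876169}. First I would lay out the ingredients precisely. By hypothesis the left end-points $t_1,t_2,\ldots$ are the atoms of a stationary Poisson point process on $\mathbb{R}$, which is exactly a Poisson process with mean measure $\rho\,m$. The lengths $l_1,l_2,\ldots$ are i.i.d.\ with common law $\mu$ and are independent of the end-point process. Hence $\{(t_i,l_i)\}$ is obtained from $\{t_i\}$ by attaching to each atom an independent mark with the fixed distribution $\mu$ on the mark space $\mathbb{R}_{\ge 0}$; regarding $\mathbb{R}_{\ge 0}$ as a subset of $\mathbb{R}$, the marked configuration lives in $\mathbb{R}^2$.

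Second I would invoke the marking theorem. Its hypotheses are met: the base mean measure $\rho\,m$ is $\sigma$-finite, and the mark kernel $K(t,\cdot)\equiv\mu$ is a probability kernel that does not depend on $t$. The conclusion is that $\{(t_i,l_i)\}$ is a Poisson process on $\mathbb{R}^2$ whose mean measure acts on rectangles as $A\times B\mapsto \rho\,m(A)\,\mu(B)$; that is, the mean measure is $\rho\,(m\otimes\mu)$. Unwinding the definition recalled above, this is precisely the assertion that each count $N(A)=\#(A\cap\Pi)$ is Poisson$(\rho\,(m\otimes\mu)(A))$ and that counts over disjoint Borel sets are independent.

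Third I would dispatch the regularity point hidden in the statement, namely that $\{(t_i,l_i)\}$ is a locally finite subset of $\mathbb{R}^2$, equivalently that $\rho\,(m\otimes\mu)$ is a locally finite measure: any bounded Borel set sits inside a square $[-R,R]^2$ with $(m\otimes\mu)([-R,R]^2)=2R\,\mu([-R,R])\le 2R<\infty$. Probabilistically, $N([-R,R]\times\mathbb{R})$ equals the number of end-points in $[-R,R]$, which is a.s.\ finite.

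The step requiring the most care is conceptual rather than computational: one must check that ``marking'' in the form used by \cite[Proposition 12.3]{MR1876169} matches our setup --- in particular that the marks are attached independently of one another and of the point process, with a law that is the same at every location, so that the mark kernel is genuinely the constant kernel $\mu$ and the mean measure factors as the product $\rho\,m\otimes\mu$ rather than as a general $\rho\,m(dt)\,K(t,dl)$. This is exactly where the i.i.d.-length and independence assumptions are used; if the length distribution were allowed to depend on the start site, the same argument would still go through and would give mean measure $\rho\int K(t,\cdot)\,m(dt)$, but the clean product form would be lost.
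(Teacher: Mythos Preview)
Your proposal is correct and follows exactly the paper's own approach: the paper simply states that the theorem is a consequence of \cite[Proposition 12.3]{MR1876169}, and you have spelled out that this is the marking theorem for Poisson processes, verifying its hypotheses in this setting. If anything, your argument is more detailed than what the paper provides.
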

We compute the expected value $\mathbb{E}[X_t]=\rho \, m \otimes \mu(\mbox{wedge}):$
\begin{eqnarray*}
\rho \, m \otimes \mu(\mbox{wedge})&&=\rho\int_{-\infty}^t \int_{t-u}^{\infty}\mu(dv)du\\
&&=\rho\int_{-\infty}^t \mu((t-u,\infty))du \\
&&=\rho\int_0^{\infty}\mu((s,\infty))ds.
\end{eqnarray*}
\subsection{Fragment lengths have the exponential distribution}
We treat the simplest case first, namely the case where the distribution $\mu$ of fragment lengths is exponential with rate $\lambda$.  Then, we have $\mu((s,\infty))=\mathbb{P}\{l>s\}=e^{-\lambda s}$, and $$\mathbb{E}(X_t)=\rho\int_0^{\infty}e^{-\lambda s}ds=\frac{\rho}{\lambda}.$$  
\begin{claim}
The process $X$ is a stationary, time-homogeneous Markov process.
\end{claim}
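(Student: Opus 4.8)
I would establish the two assertions separately: stationarity comes essentially for free from Theorem~\ref{PoissonThm}, while the Markov property rests on the memorylessness of the exponential distribution. For \emph{stationarity}, write $X_t=N(W_t)$ where $W_t=\{(u,v): u\le t,\ u+v\ge t\}$ is the wedge of Figure~\ref{wedge} and $N$ is the counting measure of the Poisson process $\{(t_i,l_i)\}$. Since $W_{t+s}$ is the translate of $W_t$ by $(s,0)$ and the mean measure $\rho\,m\otimes\mu$ is invariant under translation in the first coordinate (Lebesgue measure being translation invariant), Theorem~\ref{PoissonThm} shows that $\{(t_i-s,l_i)\}$ has the same law as $\{(t_i,l_i)\}$; hence $(X_{t+s})_{t\in\mathbb{R}}\stackrel{d}{=}(X_t)_{t\in\mathbb{R}}$ for every $s\in\mathbb{R}$ (this step uses nothing about $\mu$). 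I would also record the routine fact that, since the $l_i$ are a.s.\ distinct and continuously distributed and only finitely many intervals meet any compact set a.s., $X$ has piecewise-constant, locally bounded paths with jumps of size $\pm1$, so it is a bona fide jump process.

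For the \emph{Markov property}, fix $t$ and set $\mathcal{F}_t=\sigma(X_s:s\le t)$. The idea is to split the Poisson process $\{(t_i,l_i)\}$ relative to $W_t$ into three independent pieces (restriction theorem): the \emph{future arrivals} $\{t_i>t\}$, the intervals \emph{active at} $t$, namely $\{t_i\le t,\ t_i+l_i\ge t\}$ (of which there are exactly $X_t$), and the intervals \emph{already dead at} $t$, namely $\{t_i+l_i<t\}$. On the active piece I would change variables from $(t_i,l_i)$ to $(t_i,r_i)$ with $r_i:=t_i+l_i-t\ge0$ the \emph{residual length}; because $\mu=\mathrm{Exp}(\lambda)$ the mean measure restricted to the active region factorizes in these coordinates, so that conditionally on there being $n$ active intervals their residual lengths $r_1,\dots,r_n$ are i.i.d.\ $\mathrm{Exp}(\lambda)$ and independent of the corresponding start points. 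The crux is then that $\mathcal{F}_t$ is independent of $(r_1,\dots,r_n)$: for $s\le t$ the value $X_s$ registers only the \emph{start} points of the active intervals (their deaths all lie beyond $t$) together with the full birth/death data of the dead intervals, none of which involves the $r_i$. Finally, for $u\ge0$ we have the deterministic representation $X_{t+u}=\#\{i\ \text{active at}\ t:\, r_i\ge u\}+\Psi_u(\Pi^{>t})$, where $\Pi^{>t}$ is the future-arrivals process --- an independent fresh copy of the original Poisson process, now on $(t,\infty)\times\mathbb{R}_{\ge0}$ --- and $\Psi_u$ is a fixed functional. Hence, conditionally on $\mathcal{F}_t$, the law of the path $(X_{t+u})_{u\ge0}$ depends on the past only through $X_t$, which is the Markov property; and since neither $\mathrm{Exp}(\lambda)$ nor the law of $\Pi^{>t}$ depends on $t$, the transition mechanism is time-homogeneous.

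I expect the main obstacle to be exactly the Poisson-process bookkeeping described in the preceding paragraph: carefully invoking the restriction and marking theorems to show both that the residual lengths of the active intervals are i.i.d.\ $\mathrm{Exp}(\lambda)$ given their number, and that $\mathcal{F}_t$ really does not see them. Once this is done the conclusion is immediate, and it in fact identifies $X$ with the stationary queue-length process of an $M/M/\infty$ queue with arrival rate $\rho$ and per-customer service rate $\lambda$; this is consistent with $\mathbb{E}[X_t]=\rho/\lambda$ computed above and points to $\mathrm{Poisson}(\rho/\lambda)$ as the stationary one-dimensional marginal.
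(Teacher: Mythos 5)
Your proof is correct and follows essentially the same route as the paper: stationarity from translation invariance of the planar Poisson process in the $t$ direction, and the Markov property with time-homogeneity from memorylessness of the exponential length distribution. The paper's own argument is much more informal --- it simply observes that the probability an interval covers $t_2$ given that it covers $t_1$ equals the probability that a fresh interval starting at $t_1$ covers $t_2$ --- so your restriction/marking and residual-length bookkeeping is just a careful rigorization of the same idea (and your $M/M/\infty$ identification matches the birth-and-death description the paper gives immediately after the claim).
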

\begin{proof} It is clear that $X$ is stationary because of the manner in which it is constructed from a Poisson process on $\mathbb{R}^2$ that has a distribution which is invariant under translations in the $t$ direction; that is, the random set $\{(t_i, l_i)\}$ has the same distribution as  $\{(t_i + t, l_i)\}$ for any fixed $t \in \mathbb{R}$.  Since $\mu$ is exponential, it is memoryless, meaning for any interval length $l$ with an exponential distribution
$$\mathbb{P}\{l>a+b |  l> a\}=\mathbb{P}\{l>b\}.$$
This means that probability that an interval covers $t_2$ knowing that it covers $t_1$ is the same as the probability that an interval starting at $t_1$ covers $t_2$.  Thus, the probability that $X_{t_2}=k$  given $X_t$ for $t \le t_1$ only depends on the value of $X_{t_1}$.  Indeed, in terms of time, $\mathbb{P}\{X_{t_2}=k|X_{t_1}= k^{\prime}\}$ depends only on $t_2-t_1$. 
\end{proof}

More specifically, X is a birth-and-death process with birth rate $\beta(k) = \rho$ in all states $k$ and death rate $\delta(k) = k\lambda $ in state $k \ge 1$.  Note that as the exponential distribution is the only distribution with the memoryless property, we lose the Markov property when $\mu$ is not exponential.  

To build the tree of \textsection \ref{treesection}, we are interested in the jumps of the coverage function $f(t)=X_t$.  We hence consider the jump chain of $X$---
a discrete-time Markov chain with transition matrix
\[
P(i,j) 
=
\begin{cases}
1,& \quad \text{if $i=0$ and $j=1$}, \\
\frac{\rho}{\rho + i\lambda },& \quad \text{if $i \ge 1$ and $j=i+1$}, \\
\frac{i \lambda }{\rho + i \lambda },& \quad \text{if $i \ge 1$ and $j=i-1$}, \\
0,& \quad \text{otherwise}.
\end{cases}
\]
Suppose now we have a lattice path excursion starting at $0$.  Given a vertex $v$ of the associated tree at height $k$, we are interested in the number of offspring (at height $k+1$) of this vertex.  Suppose $i_0$ is the minimal equivalence class representative for vertex $v$, and suppose $[i_0]=\{i_0, i_1, \cdots , i_n\}$ with $i_0<i_1<\cdots <i_n$.  Then, we have $x_{i_r}=k$ for $0\leq r \leq n$, $x_{i_r+1}= k+1$ for $0\leq r \leq n-1$, $x_{i_n+1}=k-1$, and $x_t >k $ for $i_0<t<i_n$ with $t\neq$ some $i_r$.  From the Markov property, for $0\le j \le n$, $\mathbb{P}\{x_{i_j+1}=k+1 |  x_{i_j}=k\}=\frac{\rho}{\rho+\lambda k}$ and $\mathbb{P}\{x_{i_j+1}=k-1 | x_{i_j}=k\}=\frac{\lambda k}{\rho+\lambda k}$.  The resulting tree is a Galton-Watson
tree with generation-dependent offspring distributions (see \cite{MR0400434, MR0065835,MR1991122, MR0368197} for more on Galton-Watson trees).  Indeed, we have
$$\mathbb{P}\{\text{a vertex at height $k$  has $n$ offspring}\}=\left(\frac{\rho}{\rho+\lambda k}\right)^n\frac{\lambda k}{\rho+\lambda k},$$ which is the probability of $n$ failures before the first success in a sequence of independent Bernoulli
trials where the probability of success equals $ \frac{\lambda k}{\rho+ \lambda k}$. 

\subsection{Fragment lengths have a general distribution}
Suppose that we have a general distribution $\mu$ for the fragment
lengths.   We observe $X$ at some fixed ``time'' -- which might as well
be $0$ because of stationarity, and ask for the conditional
probability given $X_0$ that the next jump of $X$ will be upwards.
We know from the above that if $\mu$ is exponential with
rate $\lambda$, then conditional on $X_0 = k$ this is 
$\rho/(\rho + k \lambda)$.

Let $T$ denote the time until the next segment comes along.
This random variable has an exponential distribution with rate $\rho$ 
and is independent of $X_0$ \cite[\textsection 2.1]{Daley-book}. 
If we condition on $X_0 = k$, the two-dimensional Poisson point process
must have $k$ points in the region
\[
A:=\{(t,l) : -\infty < t \le 0, \, -t < l < \infty\}.
\]

\begin{figure}[htbp]
\begin{center}
\begin{tikzpicture}
\draw (-4,0) -- (4,0); 
\draw (2,0) -- (-2,4);
\draw (0,0) -- (0,4); 
\draw (0,0) -- (-4,4);
\fill[orange!40!white] (0,0) -- (0,2) -- (-2,4)--(-4,4) -- cycle;
\fill[blue!30!white] (0,2) -- (0,4)--(-2,4) -- cycle;
\draw ( 0cm,2pt) -- (0 cm,-3pt) node[anchor=north] {$0$}; 
\draw ( 2 cm,2pt) -- (2 cm,-3pt) node[anchor=north] {$T$}; 
\draw (3,2.5 )  node[anchor=north] {\small $(t,l)$-plane}; 
\draw (3pt,2 cm) -- (-3pt,2 cm) node[anchor=south west] {$T$}; 

\end{tikzpicture}
\caption{A wedge from the planar Poisson process.}
\label{wedge2}
\end{center}
\end{figure}
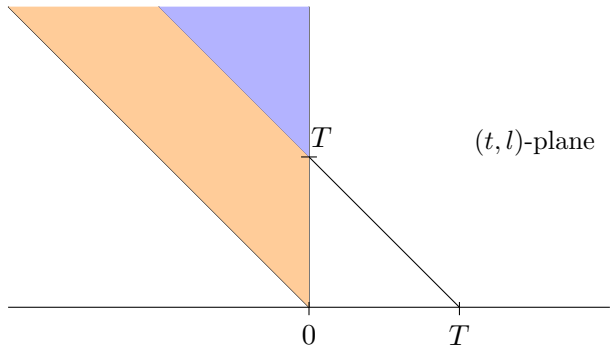
Conditionally, these $k$ points in $A$ have the same distribution as $k$ points chosen at random in $A$ according to the probability measure 
\[
\frac{\rho \, m \otimes \mu(B)}{\rho \, m \otimes \mu(A)} \quad \mbox{for} \quad B\subset A
\]
However, in order that the next jump after
$0$ is upwards, the two-dimensional Poisson point process
must have no points in the orange region
\[
\{(t,l) : -\infty < t \le 0, \, -t < l < T-t\}
\]
as these segments end before time $T$.  This leaves the $k$ points lying in the blue region 
\[
B_T:=\{(t,l) : -\infty < t \le 0, \, T-t \le l < \infty\},
\]  which occurs with probability $\left(
\frac
{
\rho \int_T^\infty \mu((u,\infty)) \, du
}
{
 \rho \int_0^\infty \mu((u,\infty)) \, du
}
\right)^k.$  Thus, conditional on $X_0 = k$,
the probability that the next jump will be upwards is
\[
\int_0^\infty
\left(
\frac
{
 \int_t^\infty \mu((u,\infty)) \, du
}
{
 \int_0^\infty \mu((u,\infty)) \, du
}
\right)^k
\rho e^{-\rho t} \, dt.
\]
Write $p(k)$ for this quantity.  A reasonable approximation to the jump skeleton $Z$ of $X$ is to take it be a discrete-time Markov
chain on the nonnegative integers with transition probabilities
\[
P(i,j) 
=
\begin{cases}
1,& \quad \text{if $i=0$ and $j=1$}, \\
p(i),& \quad \text{if $i \ge 1$ and $j=i+1$}, \\
1 - p(i),& \quad \text{if $i \ge 1$ and $j=i-1$}, \\
0,& \quad \text{otherwise}.
\end{cases}
\]
The resulting tree is then a Galton-Watson tree with generation
dependent offspring distributions, where 
\[
\mathbb{P}\{\text{a vertex at height $k$  has $n$ offspring}\} = p(k)^n (1-p(k)).
\]
\begin{example}
Suppose  $\mu$ is the point mass at $L$
(that is, all segment lengths are $L$).  Then $$\mu((u,\infty))=\begin{cases} 1, & \quad u< L \\ 0,&\quad u\ge L \end{cases},$$ and 
\[
\int_t^{\infty} \mu((u,\infty))du =
\begin{cases}
\int_t^Ldu=L-t, & \quad t<L\\
0, & \quad  t\ge L. 
\end{cases}
\]  This gives 
\begin{eqnarray*}
p(k)&&=\int_{0}^{L}\frac{(L-t)^k}{L^k}\rho e^{-\rho t}dt\\ &&=\int_{0}^{1}w^k\rho e^{-\rho (L-Lw)}Ldw \\ &&= \theta e^{-\theta} \int_0^1w^ke^{\theta w}dw \quad \mbox{for} \quad k\ge 1,
\end{eqnarray*}
 where $\theta := \rho L = \mathbb{E}[X_0]$. 
We integrate by parts and find that $p(k)=\theta e^{-\theta}q(k)$ where 
\begin{eqnarray*}
 q(k) &&= \left.\frac{w^ke^{\theta w}}{\theta}\right|^{w=1}_{w=0} -\frac{k}{\theta}\int_0^1w^{k-1}e^{\theta w}dw \\
 &&= \frac{e^{\theta}}{\theta} -\frac{k}{\theta}q(k-1) \quad \mbox{for} \quad k\ge 2,
\end{eqnarray*}
which yields the recursion 
\[
p(k)=1-\frac{k}{\theta}p(k-1), \quad k\ge 2, \quad \mbox{with} \quad p(1)=1- \frac{1}{\theta}+\frac{e^{-\theta}}{\theta}.
\]
Solving explicitly, we obtain
\[ 
p(k)=k!\left( \sum_{j=0}^k \frac{(-1)^{k-j}}{j!\theta^{k-j}} + \frac{(-1)^{k-1}e^{-\theta}}{\theta^k}\right)\quad \mbox{for} \quad k\ge 1 .
\]
\end{example}

\section{Discussion}

Our observation that randomly sequenced fragments from a genome form a
planar Poisson process in $(position,length)$ coorindates has
implications beyond the coverage function analysis performed in this
paper. For example we have found that the visualization of sequencing data in this
novel form is useful for quickly identifying instances of sequencing bias by eye, as it is
easy to ``see'' deviations from the Poisson process. An example is
shown in Figure \ref{visual} where fragments from an Illumina
sequencing experiment are compared with an idealized simulation (where
the fragments are placed uniformly at random). Specifically, paired-end reads
from an RNA-Seq experiment conducted on a GAII sequencer were mapped
back to the genome and fragments inferred from the read end locations.
Bias in the sequencing is immediately visible, likely due to
non-uniform PCR amplification \cite{Hansen2010} and other effects. We hope that
others will find this approach to visualizing fragment data of use.
\begin{figure}[htbp]
\begin{center}
\includegraphics[width=6in]{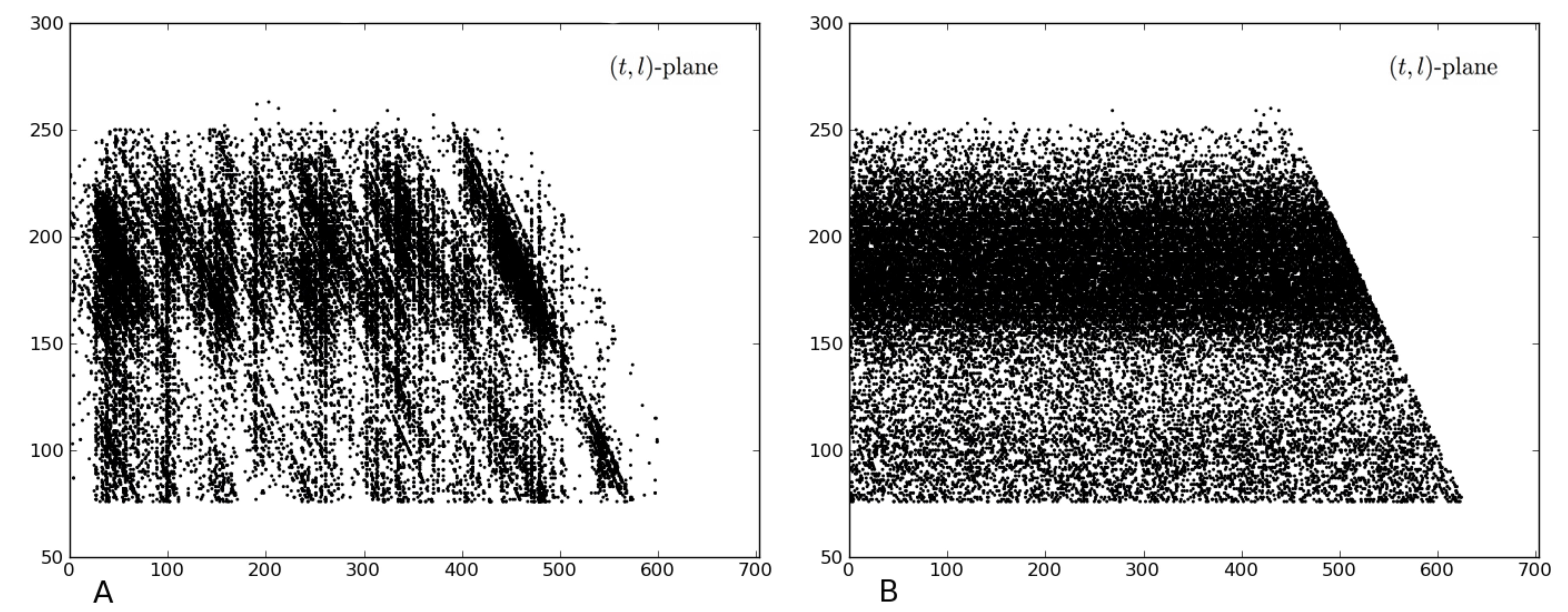}
\label{visual}
\end{center}
\caption{(A) Fragments from a sequencing experiment shown in the ($t,l$) plane. (B) The spatial Poisson process resulting from fragments with the same length distribution as (A) but with position sampled uniformly at random.}
\end{figure}

The ``shape'' we have proposed for coverage functions was motivated by persistence
ideas from topological data analysis (TDA). In the context of TDA, our
setting is very simple (1-dimensional), however unlike what is
typically done in TDA, we have provided a
detailed probabilistic analysis that can be used to construct a null
hypothesis for coverage-based test statistics. For example, we
envision computing test statistics \cite{Matsen2006} based on the trees constructed from
coverage functions and comparing those to the statistics expected from
the Galton-Watson trees. 
 It
should be interesting to perform similar analyses with
high-dimensional generalizations for which we believe many of our ideas
can be translated. There are also biological applications, for example
in the analysis of pooled experiments where fragments may be sequenced
from different genomes simultaneously.

Indeed, we believe that the study of sequence coverage functions that we have initiated
may be of use in the analysis of many sequence census methods. The
number of proposed protocols has exploded in the past two years, as a
result of dramatic drops in the price of sequencing. For example, in
January 2010, the company Illumina announced a new sequencer, the
HiSeq 2000, that they claim ``changes the trajectory of
sequencing'' and can be used to sequence 25Gb per day. Although
technologies such as the HiSeq 2000 were motivated by human genome
sequencing a surprising development has been the fact that the
majority of sequencing is in fact being used for sequence census
experiments \cite{Wold}. The vast amounts of sequence being produced
in the context of complex sequencing protocols, means that a detailed probabilistic
understanding of random sequencing is likely to become increasingly
important in the coming years.
\section{Acknowledgements}SNE is supported in part by NSF grant DMS-0907630 and VH is funded by NSF fellowship DMS-0902723. We thank Adam Roberts for his help in making Figure 4.  
\section{Author Contributions}
LP proposed the problem of understanding the random behaviour of
coverage functions in the context of sequence census methods. VH
investigated the jump skeleton based on ideas from topological data
analysis. SE developed the probability theory and identified the
relevance of Theorem 3.0.1. SNE, VH and LP worked together on all
aspects of the paper and wrote the manuscript.

 \end{document}